\newcommand{\keywords}[1]{\par\addvspace\baselineskip
\noindent\keywordname\enspace\ignorespaces#1}
\begin{document}

\mainmatter  

\title{Boundary Value Problems for \\the Helmholtz Equation \\ for a Half-plane \\with a Lipschitz Inclusion}
\titlerunning{Boundary Value Problems for the Helmholtz Equation}
\author{Evgeny Lipachev}
\authorrunning{E.Lipachev}
\institute{ Kazan Federal University, 35 Kremlyovskaya ul. \\ Kazan, Russian Federation,\\
 \mailsa\\
\url{http://kpfu.ru/}}

\toctitle{Boundary value problems for the Helmholtz equation}
\tocauthor{Lipachev}

\maketitle

\begin{abstract}
This paper considers to the problems of diffraction of electromagnetic waves on a
half-plane, which has a finite inclusion in the form of a Lipschitz
curve. The diffraction problem formulated as boundary value problem for
Helmholtz equations and boundary conditions Dirichlet or Neumann on the boundary,
as well as the radiation conditions at infinity.
We carry out research on these problems in generalized Sobolev spaces.
We use the operators of potential type, that by their properties are analogs of the
classical potentials of single and double layers.
We proved the solvability of the boundary value problems of Dirichlet and
Neumann. We have obtained solutions of boundary value problems in the
form of operators of potential type.  Boundary problems
are reduced to integral equations of the second kind.
\keywords{special Lipschitz domains, Helmholtz equation, Dirichlet problem, Neumann problem,  Boundary Integral
Equations, operator of potential type}
\end{abstract}

\section{Introduction}\label{intro}
At present, boundary problems on Lipschitz domains have been formed
in a special field of research. This is caused both with
applications in electrodynamics and other fields, and with the
theoretical importance of these studies.  An extensive bibliography
on this range of issues and the most important the results are given
in the  articles M.S.~Agranovich~\cite{agranovich-2015,agr2002,agr-men},
M.~Costabel \cite{costabel}, B.~Dahlberg and C.~Kenig~\cite{dahlberg},
D.~Jerison and C.~Kenig \cite{je-ke1981,je-ke-1981,je-ke-1995},
M.~Mitrea and M.~Taylor \cite{mitr-Ta1999,mitr-Ta2000}.
A detailed summary of the main results of the theory of boundary value problems
on Lipschitz domains can be found in~\cite{agranovich-2015}, \cite{grisvard},
\cite{McL00}.
An essential part of this theory is devoted to boundary integral operators.
A summary of the theory of integral operators on Lipschitz Domains
is contained in article~\cite{costabel}. Note that this work is the most cited in this field.
An important property of the Lipschitz domain is the possibility
approximations by infinitely differentiable domains from either side
boundaries of this domain (see \cite{agranovich-2015}, \cite{verchota}).
Lipschitz domains also satisfy the condition of a uniform cone (see
\cite{agranovich-2015,stein}).
These properties are used in the present paper to investigate the solvability
of boundary value problems of diffraction on an unbounded boundary with a Lipschitz inclusion.

In the papers \cite{lip-vuz:2001}--\cite{lip-uczap:2006} we investigated the boundary value problems
for the Helmholtz equation in domains with an rough smooth and piecewise smooth boundaries.
These studies are based on the use of generalized potentials of single and double layers.
In contrast to the classical potentials defined on closed domains, generalized potentials
are considered on open curves and on domains with an infinite boundary.
In the present paper this technique is extended to the case of a half-plane with a Lipschitz inclusion of finite size.

In the study of boundary value problems on Lipschitz boundaries we are introduced the operators of potential type.
These operators are analogues analogues of operators of single and double layers and
have properties close to those of the classical potentials of single and double layers,
which makes it possible to apply, after necessary refinements, the same reasoning technique
as in the classical case~(e.g., see~\cite{agr-men}, \cite{costabel}).

\section{Statement of the problem}\label{S:2}
A function $h: \mathbb{R} \rightarrow \mathbb{R}$ is called Lipschitz if there exist $C>0$ such that
$\left|h(x) - h(x')\right| \leqslant C \left|x-x'\right|$ for all $x, x' \in \mathbb{R}$.

Let $D \equiv D^+ =\left\{ (x,y) \in \mathbb{R}^2: y>h(x)\right\}$, where $h (x)$ is Lipschitz function with finite support.
In the terminology of publications~\cite{agr2002,agr-men,dyn1986} this domain is a special Lipschitz domain.
In~\cite{dahlberg} recommender use the notation $\Omega$ for bounded Lipschitz domains and $D$ for special Lipschitz domains respectively.

The boundary of the domain $D$ can be represented in the form $\partial D = \Gamma_1 \cup \Gamma_2$
\[
 \Gamma_1=\left\{(x,h(x)): x \not\in [0,d]\right\}, \Gamma_2=\left\{(x,0): x \in [0,d]\right\},
\mathrm{supp}\ h =[0,d].
\]
We note that this is a special case of rough boundary
(e.g., see \cite{lip-vuz:2006}, \cite{maradudin}, \cite{lip-tr:2011}). Let $D^- =\left\{ (x,y) \in \mathbb{R}^2: y<h(x)\right\}$.

We formulate the problem in the following terms. Find a function
$u(x,y) \in H^1 (\Omega)$, such that
\begin{equation}\label{Ed:e1}
\Delta u (M) + k^2 u (M) = 0, \quad M=(x,y) \in D,
\end{equation}
and the Dirichlet boundary condition
\begin{equation}\label{Ed:e2}
\gamma u (P)=  f(P), \quad P \in \partial D,
\end{equation}
or the Neumann boundary condition
\begin{equation}\label{Ed:e3}
\gamma' u (P) =  g (P), \quad P \in \partial D,
\end{equation}
and the radiation condition at infinity
\begin{equation}\label{Ed:e4}
{u}^\ast = e^{i k r} O\left(\frac{1}{\sqrt {r}} \right),
\quad \frac{\partial {u}^\ast}{\partial r} - i k
{u}^\ast = e^{i k r} o\left( \frac{1}{\sqrt{r}} \right), \quad
r \rightarrow \infty,
\end{equation}
where $r = \sqrt{x^2 + y^2} $  and $ {u}^\ast (x,y) = u (x,y) -
\widetilde{u} (x,y)$.
Through $\widetilde{u}^+$, $\widetilde{u}^-$ denoted by the solution of the
diffraction problem on the half-plane~(\cite{lip-vuz:2001,lip-vuz:2006,lip-uczap:2006}).

Here $f \in H^{1/2} (\partial D)$, $g
\in H^{-1/2} (\partial D)$, $k\in \mathbb{C}\setminus \{0\}$, $\mbox{\rm Im } k \geqslant 0$.
Through $H^t$ denoted the Sobolev spaces (e.g., see~\cite{agranovich-2015,tartar}).
Let $P\in \partial D$, denote by
\[
\gamma u (P) \equiv \left. u \right|_{\partial D} (P) = \lim_{M\rightarrow P, M\in \Lambda_\alpha (P)} u (M)
\]
-- trace of the function $u$ on the boundary $\partial D$, where
\[
Q_\alpha (P) \equiv \left\{M\in D : |M-P|<(1+\alpha)d(M,\partial D)\right\}, \quad \alpha>0
\]
-- Luzin sector.
Denote by
\[
\gamma' u (P) \equiv \left.\partial_n u\right|_{\partial D} (P) = \left(n(P), \left.(\nabla u)\right|_{\partial D} (P)\right).
\]

Note that
$\gamma : H^t (D) \rightarrow H^{t-1/2} (\partial D)$,
$\gamma' : H^t (D) \rightarrow H^{t-3/2} (\partial D)$  (e.g., see~\cite{dyn1986}, \cite{mitrea:2013}).

These problems are used as a mathematical model for finding the electromagnetic field resulting from the diffraction of an electromagnetic
plane wave in regions with an infinite rough boundary.
From a physical point of view, the boundary value problem with condition~(\ref{Ed:e2})  corresponds to the diffraction problem
$TE$--polarized electromagnetic wave, and the problem with the condition~(\ref{Ed:e3}) -- the problem of diffraction $TM$--polarized wave (e.g., see \cite{tsang}).

\section{Uniqueness of the solution of boundary value problems}\label{S:3}
 To derive Green's formulas, the Lipschitz domain is approximated by domains with smooth boundaries.
According to the results of article~\cite{verchota} (e.g., see~\cite{agr-men})  for a Lipschitz domain
$D$ there is a sequence of $C^\infty$ domains, $\overline{D}_j \subset D$,
and homeomorphisms, $\Lambda_j: \partial D \rightarrow \partial D_j$,
such that
$\sup_{P\in \partial D} \left|\Lambda_j (P) - P\right| \rightarrow 0$
as $j \rightarrow \infty$ and $\Lambda_j (P) \in Q_\alpha (P)$ for all $j$ and all $P\in \partial D$,
the normal vectors $n \left(\Lambda_j (P)\right)$ to $D_j$ coverge pointwise a.e. and in $L_2 \left(\partial D\right)$
to $n(P)$.
For a special Lipschitz domain $D$ from the section~\ref{S:2}  choose an approximating sequence of domains with condition
\[
\Gamma_1=
\left\{\left(x,0\right): x \in \mathbb{R} \setminus \left[0,d\right]\right\}
\subset
\partial D_j \cap \partial D.
\]

\begin{theorem}\label{Th:1}
If the condition $\mathrm{Im } k \, \geqslant \, 0$
the boundary value problems (\ref{Ed:e1}), (\ref{Ed:e2}), (\ref{Ed:e4}) have no more than one solution.
\end{theorem}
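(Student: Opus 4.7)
The plan is a standard energy-plus-Rellich argument adapted to the Lipschitz half-plane setting through the smooth approximating domains $D_j$ recalled just before the theorem.

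\textbf{Reduction.} Suppose $u_1,u_2$ are two solutions and set $v:=u_1-u_2\in H^1(D)$. Then $v$ solves $\Delta v+k^2v=0$ in $D$, the Dirichlet trace $\gamma v$ vanishes on $\partial D$, and because the reference field $\widetilde u$ cancels in the subtraction, $v^\ast=v$ itself satisfies the radiation condition (\ref{Ed:e4}). It suffices to prove $v\equiv 0$.

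\textbf{Green's identity on truncated approximations.} Let $B_R$ be the disk of radius $R>d$ centred at the origin and set $D_{j,R}:=D_j\cap B_R$. On this bounded Lipschitz domain I multiply the Helmholtz equation by $\bar v$ and apply Green's first identity:
\begin{equation*}
\int_{D_{j,R}}|\nabla v|^2\,dA-k^2\int_{D_{j,R}}|v|^2\,dA=\int_{\partial D_{j,R}}\bar v\,\partial_n v\,ds.
\end{equation*}
The boundary splits into the interior piece $\partial D_j\cap B_R$ and the exterior arcs $\partial B_R\cap D_j$. Using the non-tangential convergence $\Lambda_j(P)\to P$ through $Q_\alpha(P)$, the $L_2$-convergence of the normals recalled from Verchota, and non-tangential maximal function estimates for $v$ and $\nabla v$ on the Lipschitz boundary, one passes to the limit $j\to\infty$ and obtains, in view of $\gamma v=0$,
\begin{equation*}
\int_{D\cap B_R}|\nabla v|^2\,dA-k^2\int_{D\cap B_R}|v|^2\,dA=\int_{\partial B_R\cap D}\bar v\,\partial_r v\,ds.
\end{equation*}

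\textbf{Rellich at infinity.} On $\partial B_R\cap D$ the radiation condition (\ref{Ed:e4}) reads $\partial_r v=ikv+e^{ikr}o(r^{-1/2})$ and $v=e^{ikr}O(r^{-1/2})$, so
\begin{equation*}
\int_{\partial B_R\cap D}\bar v\,\partial_r v\,ds=ik\int_{\partial B_R\cap D}|v|^2\,ds+o(1),\quad R\to\infty.
\end{equation*}
When $\mathrm{Im}\,k>0$, the factor $e^{ikr}$ produces exponential decay, the right-hand side vanishes in the limit, and $\int_D|\nabla v|^2=k^2\int_D|v|^2$ together with $\mathrm{Im}\,k^2\neq 0$ (or $k^2<0$ in the purely imaginary case) forces $v\equiv 0$. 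When $k>0$ is real, taking imaginary parts of the resulting identity yields $\int_{\partial B_R\cap D}|v|^2\,ds\to 0$; since $\partial D$ coincides with $\{y=0\}$ outside the compact segment $[0,d]$ and $\gamma v=0$ there, odd reflection extends $v$ to a Helmholtz solution in the exterior of a compact set in $\mathbb{R}^2$ satisfying the classical Sommerfeld condition, whence Rellich's lemma together with unique continuation gives $v\equiv 0$ in $D$.

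\textbf{Main obstacle.} The delicate technical point is the passage $j\to\infty$ on the Lipschitz portion of the boundary: the normal-derivative trace $\gamma' v$ lives only in $H^{-1/2}(\partial D)$, so the convergence of $\int_{\partial D_j}\bar v\,\partial_n v\,ds$ must be controlled by non-tangential maximal function estimates rather than by pointwise limits. This is precisely where the approximation scheme of Verchota and the trace theory of Dahlberg--Kenig are essential.
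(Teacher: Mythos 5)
Your proposal is correct and starts from the same skeleton as the paper's proof --- a Green identity on the truncated smooth approximations $D_{j,R}=D_j\cap B_R$ from Verchota's scheme, followed by the limits $R\to\infty$ and $j\to\infty$ --- but it diverges in two substantive ways. First, you use Green's first identity with $\bar v$ (an energy identity), whereas the paper applies the second Green formula to $w$ and $\bar w$ and arrives at a volume term proportional to $4i\,\mathrm{Re}\,k\,\mathrm{Im}\,k\int_D|w|^2$ with vanishing boundary term; that coefficient is zero whenever $k$ is real, so the paper's computation is conclusive only when $\mathrm{Re}\,k\neq 0$ and $\mathrm{Im}\,k>0$ and, as written, does not actually settle Theorem~\ref{Th:1} for real $k$, which the hypothesis $\mathrm{Im}\,k\geqslant 0$ permits. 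Second, you retain the contribution of the arc $\partial B_R\cap D$ and dispose of it by an explicit Rellich-type computation, then handle real $k$ by odd reflection across the flat part of $\partial D$ together with Rellich's lemma and unique continuation; the paper instead passes to the limit $R\to\infty$ without accounting for the arc integral at all. Your route is therefore not merely different but strictly more complete: it makes the behaviour at infinity explicit and covers the real-wavenumber case, at the price of invoking nontangential maximal function estimates and Rellich's lemma. The one gap common to both arguments is the rigorous justification of the limit $j\to\infty$ in the boundary term, where $\gamma' v$ lives only in $H^{-1/2}(\partial D)$; you at least flag this explicitly, while the paper asserts the convergence without comment.
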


\begin{theorem}\label{Th:1'}
If the condition $\mathrm{Im } k \, \geqslant \, 0$ and $\mathrm{Re } k \neq 0$
the boundary value problems (\ref{Ed:e1}), (\ref{Ed:e3}), (\ref{Ed:e4}) have no more than one solution.
\end{theorem}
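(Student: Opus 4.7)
The plan is to argue by subtraction: let $v = u_1 - u_2$ be the difference of two hypothetical solutions. Then $v$ satisfies $\Delta v + k^2 v = 0$ in $D$, the homogeneous Neumann condition $\gamma' v = 0$ on $\partial D$, and, since the common half-plane field $\widetilde u$ cancels, the radiation condition (\ref{Ed:e4}) with $v^* = v$. The goal is to show $v \equiv 0$ via Green's first identity followed by an analysis of the radiation term at infinity, split into the cases $\mathrm{Im}\, k > 0$ and $\mathrm{Im}\, k = 0$.

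For each fixed $R$ large enough that $\mathrm{supp}\, h \subset B_R$, I would apply Green's first identity on $D_j \cap B_R$, where $\{D_j\}$ is the smooth approximating sequence from Section~\ref{S:3}:
\[
\int_{D_j \cap B_R} \bigl(|\nabla v|^2 - k^2 |v|^2\bigr)\, dV = \int_{\partial D_j \cap B_R} \overline{v}\, \partial_n v\, dS + \int_{\partial B_R \cap D_j} \overline{v}\, \partial_r v\, dS,
\]
and pass to the limit $j \to \infty$. Because $\Lambda_j(P) \in Q_\alpha(P)$ and the normals $n(\Lambda_j(P))$ converge to $n(P)$ pointwise a.e.\ and in $L_2(\partial D)$, the first boundary integral converges to the trace pairing $\int_{\partial D \cap B_R}\overline{\gamma v}\,\gamma' v\, dS$, which vanishes by the Neumann condition; the volume integral converges since $v \in H^1(D)$.

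Now let $R \to \infty$. If $\mathrm{Im}\, k > 0$, the radiation condition yields $|v|, |\partial_r v| = O\!\left(e^{-(\mathrm{Im}\, k)r}\, r^{-1/2}\right)$, so the outer boundary integral vanishes in the limit and we get $\int_D |\nabla v|^2 = k^2 \int_D |v|^2$. Taking imaginary parts gives $2\,\mathrm{Re}(k)\,\mathrm{Im}(k) \int_D |v|^2 = 0$, and the hypotheses $\mathrm{Re}\, k \neq 0$, $\mathrm{Im}\, k > 0$ force $v \equiv 0$. If instead $\mathrm{Im}\, k = 0$ (so $k$ is real and nonzero), substitute $\partial_r v = ik v + e^{ikr} o(r^{-1/2})$ into the outer integral; the cross term is $o(1)$ because $|v| = O(r^{-1/2})$ on a curve of length $O(R)$, and the left side of Green's identity is real, so taking imaginary parts yields the Rellich-type relation $\int_{\partial B_R \cap D} |v|^2\, dS \to 0$. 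Since the boundary $\partial D$ is flat outside $\mathrm{supp}\, h$ and $\gamma' v = 0$ there, extend $v$ by even reflection across the $x$-axis to a Helmholtz solution on $\mathbb{R}^2$ minus a compact set; Rellich's lemma then yields $v \equiv 0$ outside a large disk, and unique continuation propagates this back to all of $D$.

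The main obstacle is the limit $j \to \infty$ in Green's identity: converting the surface integral over $\partial D_j$ into the boundary trace pairing on $\partial D$ is the substantive use of the Verchota approximation, the Luzin sector nontangential traces, and the trace estimates $\gamma : H^t(D) \to H^{t-1/2}(\partial D)$, $\gamma' : H^t(D) \to H^{t-3/2}(\partial D)$ recalled in Section~\ref{S:2}. A secondary subtlety is the Rellich/extension step in the real-$k$ case, which relies essentially on the boundary being flat outside a bounded set and on the Neumann condition being preserved under even reflection — which is exactly why the hypothesis $\mathrm{Re}\, k \neq 0$ appears in this theorem but not in the corresponding Dirichlet statement.
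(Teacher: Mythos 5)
Your proposal is correct, and its skeleton coincides with the paper's: form the difference $v=u_1-u_2$ of two solutions (so that $\widetilde u$ cancels and $v^\ast=v$), apply a Green identity on the Verchota approximating smooth domains $D_j$ truncated by a large disk, pass to the limit in $j$ using the nontangential convergence of traces and normals, and kill the inner boundary term with the homogeneous boundary condition. The paper uses the \emph{second} Green identity applied to $w$ and $\overline{w}$, which is exactly the imaginary part of your first identity, and arrives at $4\,\mathrm{Re}\,k\,\mathrm{Im}\,k\int_D|w|^2\,d\sigma=0$. The substantive divergence is your treatment of the case $\mathrm{Im}\,k=0$, which the theorem's hypotheses allow: there the paper's identity reads $0=0$ and yields nothing, yet the paper concludes $\int_D|w|^2\,d\sigma=0$ anyway, so the published proof is incomplete precisely where your Rellich-type step (imaginary part of the outer integral giving $\int_{\partial B_R\cap D}|v|^2\,d\ell\to 0$, even reflection across the flat part of the boundary where $h\equiv 0$ and $\gamma'v=0$, Rellich's lemma, unique continuation) takes over. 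In that sense your argument is not merely a different route but a repair of the paper's: it is the only place in either proof where the radiation condition is genuinely used rather than invoked, and it is the standard mechanism for real wavenumbers. One small correction to a side remark: your suggestion that the reflection step explains why $\mathrm{Re}\,k\neq 0$ appears in the Neumann theorem but not the Dirichlet one does not hold up — that hypothesis only excludes purely imaginary $k$, and for purely imaginary $k$ the \emph{real} part of your first identity already gives $\int_D(|\nabla v|^2+|k|^2|v|^2)=0$, hence $v\equiv 0$, for either boundary condition; so the extra hypothesis is not load-bearing for uniqueness. This does not affect the validity of your proof.
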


\begin{proof}
Let $\left\{D_j\right\}_{j\in \mathbb{N}}$ be a system of smooth domains approximating the domain $D$.
Let $u$, $v$  be two solutions of the boundary value problem and $w=u-v$.

On the part of the boundary $\Gamma_1$, because $\Gamma_1 \subset \partial D \cap \partial D_j$, we have
\[
\left. \gamma_D \right|_{\Gamma_1} w= \left. \gamma_{D_j}
\right|_{\Gamma_1} w = \left. w \right|_{\Gamma_1} =0,
\left. {\gamma'}_D \right|_{\Gamma_1} w= \left. {\gamma'}_{D_j}
\right|_{\Gamma_1} w = \left. \partial_\nu w \right|_{\Gamma_1} =0.
\]
Further, since
\[
\gamma_{D_j} u \rightarrow \gamma_D u, \quad \gamma_{D_j} v \rightarrow \gamma_D v,
\]
we have
\[
\left. w \right|_{\partial D_j}\equiv \gamma_{D_j} w = \gamma_{D_j} (u-v) \rightarrow 0, \quad j \rightarrow \infty.
\]

Similarly,
\[
\left. \partial_\nu w \right|_{\partial D_j}\equiv {\gamma'}_{D_j} w = {\gamma'}_{D_j} (u-v) \rightarrow 0, \quad j \rightarrow \infty.
\]

Consider functions $w_s \in C^{\infty}$ such that $w_s \rightarrow w$, $s \rightarrow \infty$.
Let $R>d$ be a real number and $S_R$ a circle of radius $R$.
In the bounded smooth region $D_{j,R}=D_j \cap S_R$ we apply the second Green's formula to the functions $w_s$ è $\overline{w}_s$:
\[
 \int\limits_{D_{j,R}} \left(w_s \, \Delta \overline{w}_s +
\overline{w}_s \, \Delta w_s\right)\, d\sigma  \: = \:
\int\limits_{\partial D_{j,R}} \left(w_s \, \partial_\nu \overline{w}_s
- \overline{w}_s \, \partial_\nu w_s\right)
  \, d\ell_P.
\]
In the limit to $R \rightarrow \infty$, we obtain
\[
\int\limits_{D_{j}} \left(w_s \, \Delta \overline{w}_s +
\overline{w}_s \, \Delta w_s\right)\, d\sigma  \: = \:
\int\limits_{\partial D_{j}} \left(w_s \, \partial_\nu \overline{w}_s -
\overline{w}_s \, \partial_\nu w_s\right)
  \, d\ell_P.
\]
Further, in the last formula we pass to the limit with respect to $s \rightarrow \infty$ and take into account the limit relations
\[
w_s (P) \rightarrow w (P), \quad \overline{w}_s (P) \rightarrow
\overline{w} (P), \quad P \in D_j,
\]
\[
\left. w_s\right|_{\partial D_j} \rightarrow \left.
w\right|_{\partial D_j}, \quad \left.
\overline{w}_s\right|_{\partial D_j} \rightarrow \left.
\overline{w}\right|_{\partial D_j}, \quad
\]
\[
\left. \partial_\nu w_s\right|_{\partial D_j} \rightarrow \left.
\partial_\nu w\right|_{\partial D_j}, \quad \left.
\partial_\nu \overline{w}_s\right|_{\partial D_j} \rightarrow \left.
\partial_\nu \overline{w}\right|_{\partial D_j}.
\]
As a result, we obtain
\[
\int\limits_{D_{j}} \left(w\, \Delta \overline{w} + \overline{w} \,
\Delta w\right)\, d\sigma   =  \int\limits_{\partial D_{j}}
\left(w \, \partial_\nu \overline{w} - \overline{w} \,
\partial_\nu w\right)
  \, d\ell_P.
\]
Now we pass to the limit with respect to $j \rightarrow \infty$, taking into account the approximation properties of a Lipschitz domain $D$ by smooth domains.
\begin{equation}\label{doked12}
\int\limits_{D} \left(w\, \Delta \overline{w} + \overline{w} \,
\Delta w\right)\, d\sigma   =  \int\limits_{\partial D}
\left(\gamma_D w \, {\gamma'}_D \overline{w} - \gamma_D\overline{w} \,
{\gamma'}_D w\right)
  \, d\ell_P.
\end{equation}

Because the
$\Delta w = -k^2 w, \quad \Delta \overline{w} = -\overline{k}^2 \overline{w}$,
then the left-hand side of the equality~(\ref{doked12}) takes the form
\[
i 4 \mathrm{Re }k\, \mathrm{Im }k \int_D \left|w\right|^2 \,
\partial \sigma.
\]

As a consequence of the boundary conditions of the boundary value problem, the right-hand side of equation~(\ref{doked12}) is $0$.

Since $\mathrm{Re } k \neq 0$, $\mathrm{Im } k \geq  0$, we have
\[
\int\limits_D \left|w\right|^2 \, d\sigma = 0.
\]

From the last relation we conclude that $w \equiv 0$ is satisfied in region $D$ and, as a consequence, we obtain $u=v$.
\end{proof}

\section{Existence of solutions of boundary value problems}\label{S:4}

One method of solving boundary value problems of diffraction is the method of integral equations (e.g., see \cite{colt-kr}, \cite{maz1}).
In the classical theory, boundary value problems are considered on bounded domains with a sufficiently smooth boundary,
the potentials of the single and double layers are used, as well as the technique of Green's formulas.
In the case of a deterioration of the properties of the boundary, it is required to refine the definitions of the potentials
and the conditions for the applicability of the Green's formulas.
In the study of boundary value problems for the Helmholtz equation on rough boundaries, we used generalized potentials (\cite{lip-vuz:2001}--\cite{lip-uczap:2006}).
In the case of Lipschitz boundaries, the extension of the concept of a potential is called a potential type operator.
We note that in the case of Lipschitz domains the properties of operators of potential type are analogous to those of classical
potentials of a single and double layers, in particular, formulas for the jump of values on the boundary.

As shown in my works~\cite{lip-vuz:2001}--\cite{lip-uczap:2006} in the case of an rough boundary from class $C^{(1,\nu)}$, $\nu \in (0,1]$,
under the conditions $\mathrm{Im}\, k \geqslant 0$, $\mathrm{Re}\, k \neq 0$
the boundary value problem has a unique solution and for the solution we have the representation
\[
u(x,y) = \widetilde{u} (x,y) + v(x,y),
\]
\begin{equation}\label{E:w}
v (x,y) = \left(W (k) \varphi \right) (x,y) = \int\limits_{\partial D}
\partial_{n(P)} G_1 (k;M,P) \varphi (\tau) \, ds_P
\end{equation}
-- in the case of the problem with the condition~(\ref{Ed:e2}) on the boundary $\partial D$ and
\begin{equation}\label{E:v}
v (x,y) = \left(V (k)\varphi\right) (x,y) = \int\limits_{\partial D} G_2
(k;M,P) \varphi (\tau) \, ds_P
\end{equation}
-- in the case of condition~(\ref{Ed:e3}) on the boundary. The function$\varphi (x)$ is a solution of the integral equation
\begin{equation}\label{E:ie1}
-\pi \varphi (x) + \int\limits_{0}^d \partial_{n(P)} G_1 (k;M,P)
\sqrt{1+{h'}^2 (\tau)} \, \varphi (\tau) \, d\tau = - f(M) +\widetilde{u}
(M)
\end{equation}
-- in the case of condition~(\ref{Ed:e2}) on the boundary and
\begin{equation}\label{E:ie2}
-\pi \varphi (x) + \int\limits_{0}^d \partial_{n(M)} G_2 (k;M,P)
\sqrt{1+{h'}^2 (\tau)} \, \varphi (\tau) \, d\tau = - g(M)+\partial_{n(M)}\widetilde{u} (M)
\end{equation}
-- in the case of condition~(\ref{Ed:e3}) on the boundary.

These formulas use functions
\[
G_m (k;M,P) = \frac{\pi i}{2}
 \left\{
   H_0^{(1)} (k r) + (-1)^m H_0^{(1)} \left(k r^\ast\right)
 \right\},
\quad m=1,2,
\]
$M = (x,h(x))$, $P=(\tau, h(\tau))$, $r=\sqrt{(x-\tau)^2
+(y-h(\tau))^2}$, $r^\ast=\sqrt{(x-\tau)^2 +(y+h(\tau))^2}$,
$d=\mathrm{supp} \, h(x)$ -- the length of the irregular part of the boundary $\partial D$.
Through $H_0^{(1)} (z)$ denotes the Hankel function of the first kind of order zero (e.g., see~\cite{abramowitz}).

In the case of a special Lipschitz domain $D$, we consider the operators
\begin{equation}\label{pot5}
\left(\mathcal{V} (k) \varphi \right) (M) = \int\limits_{\Gamma_2}
G_2 (k;M,P) \varphi (\tau) \, d\ell_P, \quad   M \in D,
\end{equation}
\begin{equation}\label{pot6}
\left(\mathcal{W} (k) \psi \right)(M) = \int\limits_{\Gamma_2}
\partial_{n (P)} G_1 (k;M,P) \psi (\tau) \, d\ell_P, \quad  M \not\in
\partial D.
\end{equation}

Here $n (P)$ is the unit normal vector at the point $P$ directed to the region $y>0$.
This vector is defined for almost all $P \in \partial D$.

For $M=\left(x,h(x)\right)\in \partial D$ we define
\[
V (k) \varphi (x) = \lim_{\varepsilon \to
0}\int\limits_{|M-P|>\varepsilon} G_2 (k;M,P) \varphi (\tau) \,
d\ell_P.
\]

From the results of \cite{verchota} (e.g., see \cite{maz1}) the following statements.

\begin{lemma}\label{Th:6-1}
If $\varphi \in L_p \left(\partial D\right)$, $1<p<\infty$, then there is the direct value of the normal derivative of the operator~(\ref{pot5})
\[
V' (k) \varphi (x) \equiv \displaystyle\left[\partial_{n (M)}
\left(\mathcal{V} (k) \varphi \right) \right] (x) =
\displaystyle\lim_{\varepsilon \to
0}\displaystyle\int\limits_{|M-P|>\varepsilon}
\partial_{n (M)} G_2 (k;M,P) \varphi (\tau) \, d\ell_P.
\]
Here the limit is understood in the sense of convergence in $L_p\left(\partial D\right)$ or pointwise convergence for almost all $M \in \partial D$.

The normal derivative $\partial_{n (M)} \left(\mathcal{V} (k) \varphi \right)$
for almost all $M \in \partial D$ has nontangential limits $\partial_{n (M)} \left(\mathcal{V} (k) \varphi \right)_{\pm}$
on the side $D^{\pm}$, which are expressed by formulas
\begin{equation}\label{potprsl1}
\partial_{n (M)}
\left(\mathcal{V} (k) \varphi \right)_{\pm} = \pm \frac{1}{2}
\varphi + V' (k) \varphi.
\end{equation}
\end{lemma}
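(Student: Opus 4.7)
The plan is to reduce the statement to the classical jump relation for the single-layer potential of the Laplace operator on a Lipschitz curve, exploiting the expansion $H_0^{(1)}(z)=\frac{2i}{\pi}\log z + h_0(z)$ with $h_0$ entire. This lets me write the kernel $G_2(k;M,P)$ as a multiple of $\log|M-P|$ plus a smooth remainder, together with the image contribution $H_0^{(1)}(kr^\ast)$, whose singularity arises only at points where $M$ coincides with the reflected point $P^\ast$.

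Accordingly, I would split $\partial_{n(M)}G_2(k;M,P)=K_0(M,P)+K_1(M,P)+K_2(M,P)$, where $K_0$ carries the principal singularity of Laplace type, proportional to $(M-P)\cdot n(M)/|M-P|^2$; $K_1$ is a weakly singular kernel of order $\log|M-P|$ coming from the subleading terms of the Hankel asymptotic; and $K_2=\partial_{n(M)}H_0^{(1)}(kr^\ast)$ is the image part. For $K_0$, the theorem of Verchota (see \cite{verchota,maz1}), which is ultimately an application of the Coifman--McIntosh--Meyer bound on the Cauchy integral along a Lipschitz graph, guarantees that the truncated principal-value integral converges both in $L_p(\partial D)$ and pointwise almost everywhere, and defines a bounded operator on $L_p(\partial D)$ for $1<p<\infty$. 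The operators associated with $K_1$ and $K_2$ are bounded on $L_p(\partial D)$ by standard weakly singular estimates; for the image kernel one uses that for $P\in\Gamma_2$ the reflected point $P^\ast$ stays away from a nontangential approach cone at $M\in\partial D$ lying on the same side, so $K_2$ is bounded on compacta.

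To obtain (\ref{potprsl1}), I then pass to the nontangential limits $\partial_{n(M)}(\mathcal{V}(k)\varphi)_{\pm}(M_0)$ along $Q_\alpha(M_0)\cap D^{\pm}$. The $K_1$ and $K_2$ contributions extend continuously across $\partial D$ and hence give the same limit from both sides, so every jump must originate from $K_0$. The classical calculation for the Laplace single-layer normal derivative on a Lipschitz graph, carried out in \cite{verchota}, shows that these limits exist almost everywhere and equal $\pm\frac{1}{2}\varphi(M_0)$ plus the principal value of $K_0$ against $\varphi$. Recombining the three pieces into one principal-value integral recovers $V'(k)\varphi$, yielding $\partial_{n(M)}(\mathcal{V}(k)\varphi)_{\pm}=\pm\frac{1}{2}\varphi+V'(k)\varphi$.

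The main obstacle is the nontangential maximal function estimate $\|N(\nabla\mathcal{V}(k)\varphi)\|_{L_p(\partial D)}\leqslant C\|\varphi\|_{L_p(\partial D)}$, which is precisely what guarantees that the nontangential limits and the principal value both exist and that equality (\ref{potprsl1}) holds almost everywhere on $\partial D$. This bound is the deep ingredient: it reduces via the decomposition above to the Coifman--McIntosh--Meyer estimate for the Cauchy integral on the Lipschitz parametrization of $\Gamma_2$, and is transferred to the Helmholtz setting through the smoothness of the remainder kernels $K_1$ and $K_2$. Once it is available, the remainder of the argument proceeds exactly as in the $C^{(1,\nu)}$ case treated in \cite{lip-vuz:2001}--\cite{lip-uczap:2006}.
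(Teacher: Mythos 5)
The paper offers no proof of Lemma~\ref{Th:6-1} at all: it is stated, together with Lemmas~\ref{Th:6-2} and~\ref{Th:6-3}, as following ``from the results of \cite{verchota} (e.g., see \cite{maz1})'', with no argument supplied. Your proposal therefore does not so much diverge from the paper as supply the reduction the paper leaves implicit, and the route you choose --- peel off the Laplace-type principal singularity $(M-P)\cdot n(M)/|M-P|^{2}$, invoke Verchota's theorem (resting on the Coifman--McIntosh--Meyer bound) for $L_p$-boundedness, a.e.\ existence of the principal value, the nontangential maximal function estimate and the $\pm\frac{1}{2}\varphi$ jump, and absorb everything else into operators that do not jump --- is exactly the standard way the cited references transfer the Laplace results to the Helmholtz kernel. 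In outline your argument is correct and is more informative than what the paper provides.

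Two points need tightening before it is a proof. First, $H_0^{(1)}(z)-\frac{2i}{\pi}\log z$ is not entire; the correct expansion is $H_0^{(1)}(z)=\frac{2i}{\pi}J_0(z)\log z+(\text{entire})$, so your remainder contains terms of order $z^{2}\log z$. This is harmless --- after one differentiation they are still weakly singular and go into $K_1$ --- but the claim as written is false and should be restated. Second, and more seriously, the image kernel $K_2=\partial_{n(M)}H_0^{(1)}(kr^{\ast})$ is not ``bounded on compacta'': the reflected point $P^{\ast}=(\tau,-h(\tau))$ coincides with $P$ wherever $h(\tau)=0$ --- at least at the endpoints of $[0,d]$, and on any flat portion of the boundary --- and there $r^{\ast}=r$, so $K_2$ is exactly as singular as the direct kernel. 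You must either argue that this singular set is negligible (if $h>0$ on $(0,d)$ the reflected curve meets $\partial D$ only at two points, and the resulting operator is still controlled by the same Calder\'on--Zygmund machinery while contributing no jump almost everywhere), or, where $P^{\ast}=P$, fold $K_2$ into the principal part and verify that the total jump is still the one asserted in~(\ref{potprsl1}). With those repairs the argument is sound and consistent with what \cite{verchota} and \cite{maz1} actually establish, and with the $C^{(1,\nu)}$ case in \cite{lip-vuz:2001}--\cite{lip-uczap:2006}.
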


\begin{lemma}\label{Th:6-2}
If $\psi \in L_p \left(\partial D\right)$, $1<p<\infty$, then almost everywhere on $\partial D$ there exists the limit
\begin{equation}\label{pot9}
W (k) \psi (x) = \lim_{\varepsilon \to
0}\int\limits_{|M-P|>\varepsilon} \partial_{n (P)} G_1 (k;M,P)
\psi (\tau) \, d\ell_P.
\end{equation}
\end{lemma}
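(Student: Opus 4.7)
The plan is to reduce the claim to the classical result on a.e.\ existence of principal values of Calderón--Zygmund singular integrals on a Lipschitz graph. The singular part of $\partial_{n(P)} G_1(k;M,P)$ coincides, up to smooth corrections, with the Laplace double-layer kernel, for which the a.e.\ convergence of truncated integrals is the Coifman--McIntosh--Meyer theorem in the form adapted to double-layer potentials by Verchota~\cite{verchota} and already cited just above the lemma statement.

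First I would isolate the singularity of the kernel using the expansion
\[
H_0^{(1)}(z) = \tfrac{2i}{\pi}\log z + \Phi(z), \qquad \Phi \text{ entire},
\]
near $z=0$. Substituting this into
\(
G_1 = \tfrac{\pi i}{2}\bigl\{H_0^{(1)}(kr) - H_0^{(1)}(kr^{\ast})\bigr\}
\)
and differentiating in $n(P)$ yields a decomposition of the form
\[
\partial_{n(P)} G_1(k;M,P) = -\,\partial_{n(P)}\log r + \partial_{n(P)}\log r^{\ast} + R(k;M,P),
\]
where the remainder $R$ is at most weakly singular (controlled by $|M-P|\cdot\log|M-P|$ in a neighbourhood of the diagonal). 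For $\psi \in L_p(\partial D)$ the convolution with $R$ is absolutely convergent, so this part of the integrand poses no issue and no truncation is required.

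Next I would handle the two logarithmic--derivative terms. After parametrising $\partial D$ as the graph of the Lipschitz function $h$, the kernel $\partial_{n(P)}\log r$ is precisely the double-layer kernel for the Laplacian on a Lipschitz curve. Its a.e.\ principal value on $\partial D$ exists for every $\psi \in L_p(\partial D)$, $1<p<\infty$, by the Coifman--McIntosh--Meyer theorem as applied in~\cite{verchota}. The reflected term $\partial_{n(P)}\log r^{\ast}$ is handled the same way: the map $P \mapsto P^{\ast}$ sends the Lipschitz graph $\partial D$ onto another Lipschitz graph (the reflection across $y=0$), so after pulling back, the corresponding operator is again a Calderón commutator to which the same theorem applies. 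Summing the three contributions gives the a.e.\ existence of the limit in~(\ref{pot9}) and defines $W(k)\psi$ as an $L_p$-function on $\partial D$.

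The main obstacle is located entirely in the first logarithmic term: the a.e.\ convergence of the truncated Cauchy-type integral against a Lipschitz graph. This is the deep ingredient, and I would not try to reprove it, but rather invoke the Verchota/Coifman--McIntosh--Meyer result directly, exactly as done for the Laplace case. The Hankel-function splitting and the treatment of the reflected kernel are routine once that tool is in hand.
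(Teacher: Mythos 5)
The paper offers no proof of this lemma at all --- it simply asserts that Lemmas~\ref{Th:6-1}--\ref{Th:6-3} follow ``from the results of \cite{verchota}'' --- and your proposal is exactly the standard reduction to those results (peel off the logarithmic singularity of $H_0^{(1)}$, treat the two resulting double-layer kernels on the Lipschitz graph and its reflection by Coifman--McIntosh--Meyer/Verchota, and absorb the weakly singular remainder directly), so it is consistent with, and considerably more explicit than, what the paper does. One small correction that does not affect the argument: $H_0^{(1)}(z)-\tfrac{2i}{\pi}\log z$ is not entire, since it still contains a term $\tfrac{2i}{\pi}\bigl(J_0(z)-1\bigr)\log z$; but that term is $O\bigl(z^2\log z\bigr)$, so your stated bound $|M-P|\,\log|M-P|$ on the differentiated remainder remains valid.
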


This limit is called the direct value of the operator $\mathcal{W} (k)\psi$.

\begin{lemma}\label{Th:6-3}
If $\psi \in L_p \left(\partial D\right)$, $1<p<\infty$, then almost everywhere on the boundary $\partial D$ has nontangential limits
$\left(\mathcal{W} (k) \psi \right)_{\pm}$ on the side $D^{\pm}$, and the following equalities hold:
\begin{equation}\label{potdvsl}
\left(\mathcal{W} (k) \psi \right)_{\pm}\psi = \mp \frac{1}{2}
\psi + W (k) \psi.
\end{equation}
\end{lemma}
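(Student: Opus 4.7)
The plan is to reduce the assertion to the classical jump relation for the Laplace double-layer potential on a Lipschitz curve, for which Verchota's theorem (cited just before the lemma) applies directly. The key step is a kernel decomposition that isolates the logarithmic singularity responsible for the jump and leaves a remainder whose associated potential is continuous across $\partial D$ in the nontangential sense.

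First, using the small-argument expansion $H_0^{(1)}(z) = \frac{2i}{\pi}\log z + h(z)$ with $h$ holomorphic near zero, one writes
\[
G_1(k;M,P) = -\log|M-P| + \log|M-P^{\ast}| + R(k;M,P),
\]
where $P^{\ast}$ denotes the reflection of $P$ in the $x$-axis and $R$ is a smooth kernel. Differentiating in $n(P)$, the only principal-value singularity in $\partial_{n(P)}G_1(k;M,P)$ is that of the Laplace double-layer kernel $-\partial_{n(P)}\log|M-P|$. The reflected term yields a kernel bounded for $M$ approaching $P\in\Gamma_2$ whenever $h(\tau)>0$, since then $|M-P^{\ast}|\ge 2h(\tau)>0$, while $\partial_{n(P)}R$ is at worst weakly singular; both contributions give integrals whose nontangential limits from $D^{+}$ and $D^{-}$ agree and coincide with their direct value on $\partial D$.

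Next, Verchota's theorem for the Laplace double layer on the Lipschitz graph $\Gamma_2$ supplies, for $\psi \in L_p(\partial D)$ with $1<p<\infty$, the almost-everywhere existence of nontangential limits of $\int_{\Gamma_2}\partial_{n(P)}\log|M-P|\,\psi(\tau)\,d\ell_P$ from $D^{\pm}$ and the classical jump formula. After tracking the multiplicative constants inherited from the factor $\frac{\pi i}{2}$ in the definition of $G_1$, this produces precisely the $\mp\frac{1}{2}\psi$ contribution in the statement. Adding back the continuous contributions from the reflected and smooth pieces, and recognizing the sum of principal values as the direct value $W(k)\psi$ defined in Lemma~\ref{Th:6-2}, yields $(\mathcal{W}(k)\psi)_\pm = \mp\frac{1}{2}\psi + W(k)\psi$ almost everywhere on $\partial D$.

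The main technical obstacle is the invocation of Verchota's machinery itself, which rests on the Coifman--McIntosh--Meyer $L_p$-boundedness of Cauchy-type integrals on Lipschitz curves; one must verify that the kernel decomposition respects the nontangential approach region $Q_\alpha(P)$ and that the normal $n(P)$, defined only almost everywhere on $\Gamma_2$, is integrated against the singular kernel in the sense consistent with \cite{verchota}. A subsidiary care point is the behavior at the endpoints $\tau=0$ and $\tau=d$, where $h$ vanishes and the direct and reflected logarithmic singularities coincide; there the difference $\log|M-P|-\log|M-P^{\ast}|$ must be handled as a single combined singularity, but its contribution still falls under the same singular-integral framework once the geometry of the approximating smooth domains $D_j$ from Section~\ref{S:3} is used to control the limit.
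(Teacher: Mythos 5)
The paper does not actually prove Lemma~\ref{Th:6-3}: it is stated, together with Lemmas~\ref{Th:6-1} and~\ref{Th:6-2}, as following ``from the results of \cite{verchota}'' (see also \cite{maz1}), with no argument given. Your reduction --- splitting $G_1$ into the harmonic logarithmic kernel, its reflected image, and a smooth remainder, and then invoking Verchota's nontangential-limit and jump theorems for the Laplace double layer on a Lipschitz graph --- is the standard way to justify exactly that citation, so in substance you are supplying the argument the paper leaves implicit, and the overall route is sound.

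One concrete point is not right as written: the constant. With the paper's normalization $G_1 = \frac{\pi i}{2}\{H_0^{(1)}(kr) - H_0^{(1)}(kr^\ast)\}$ and $H_0^{(1)}(z) = \frac{2i}{\pi}\log z + O(1)$, the singular part of $G_1$ is $-\log|M-P|$, which is $2\pi$ times the standard fundamental solution $-\frac{1}{2\pi}\log|M-P|$. Verchota's jump $\mp\frac{1}{2}\psi$ therefore rescales to $\mp\pi\psi$ --- consistent with the $-\pi\varphi$ term in the integral equations (\ref{E:ie1})--(\ref{E:ie2}), but not with the $\mp\frac{1}{2}\psi$ claimed in (\ref{potdvsl}). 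You assert that ``tracking the multiplicative constants \dots produces precisely the $\mp\frac{1}{2}\psi$'' without carrying out the computation; done explicitly it produces $\mp\pi\psi$, so either the lemma's constant or your bookkeeping is off, and the calculation must be exhibited rather than asserted. A secondary caution: your boundedness argument for the reflected kernel assumes $h(\tau)>0$, but $h$ is only Lipschitz with support $[0,d]$ and may vanish or change sign inside $(0,d)$; wherever $h(\tau)=0$ one has $P^\ast=P$, so the combined treatment you reserve for the endpoints must in fact be applied on the entire zero set of $h$ (and if $h<0$ somewhere, $P^\ast$ may even lie in $D$, which breaks the image construction altogether).
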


For the operators~(\ref{pot5}), (\ref{pot6}), the basic potentials are satisfied (e.g., see \cite{agranovich-2015,agr2002,agr-men}).
Therefore, we can consider them analogues of the potentials of a single and a double layer.
These operators are called operators of the potential type.

The following propositions hold (e.g., see \cite{costabel,je-ke1981,je-ke-1981,je-ke-1995,mitr-Ta1999})
\[
\mathcal{V} (k) : H^{t-1/2} \left(\partial D\right) \rightarrow
H^{t+1/2} \left(\partial D\right), \quad
\]
\[
\mathcal{W} (k) : H^{t+1/2} \left(\partial D\right) \rightarrow
H^{t+1/2} \left(\partial D\right), \quad -\frac{1}{2} \leqslant t \leqslant
\frac{1}{2},
\]
\[
\mathcal{V} (k) : L_2 \left(\partial D\right) \rightarrow H^{1}
\left(\partial D\right),
\quad
\mathcal{W} (k) : H^{1} \left(\partial D\right) \rightarrow H^{1}
\left(\partial D\right).
\]

Let $\left\{D_j\right\}_{j\in \mathbb{N}}$ be a system of smooth domains approximating the Lipschitz domain $D$.
We denote by $\left\{u_j\right\}$ the sequence of solutions of boundary value problems in smooth domains $D_j$.
This sequence can be adjusted so that
$\left.u_j\right|_{D_k} = u_k, \quad  k\leqslant j$.

In each smooth domain $D_j$ we consider the boundary value problem
\[
\Delta u (M)+ k^2 u (M) =0, \quad M \in D_j,
\]
\[
\left. u\right|_{\partial D_j} = \left. u_{j+1}\right|_{\partial
D_j}
\]
and, in addition, we require that the function $u$ satisfy the radiation conditions~(\ref{Ed:e4}).

Note that the value of the function $u_{j+1}$  on $\partial D_j$ is defined by virtue of the fact that $D_j \subset D_{j+1}$.

For a given boundary value problem, the solvability conditions are satisfied; therefore, for each $j$ there exists a classical solution $\dot u_j$ of this problem.

But, as a function of $u_{j+1}$ in $D_j$ also satisfies the conditions of the boundary value problem, then, by the uniqueness, we get
\[
\dot u_j = \left.u_{j+1}\right|_{D_j}.
\]
Note also that on the general section boundary areas $D_j$ and $D_{j+1}$ we have
\[
\left. \dot u_j\right|_{\partial D_j \cap \partial D_{j+1}} =
\left.u_{j+1}\right|_{\partial D_j \cap \partial D_{j+1}} =
\left.f\right|_{\partial D_j \cap \partial D_{j+1}}
\]
-- in the case of the Dirichlet problem and
\[
\left. \partial_\nu \dot u_j\right|_{\partial D_j \cap \partial
D_{j+1}} = \left. \partial_\nu u_{j+1}\right|_{\partial D_j \cap
\partial D_{j+1}} = \left.g\right|_{\partial D_j \cap \partial
D_{j+1}}
\]
-- in the case of the Neumann problem.

For each finite set of solutions $\left\{u_0,u_1, \ldots,
u_j\right\}$ of the boundary value problems in the domains $D_0, D_1, \ldots,
D_j$ can perform such adjustment, starting from the value $j$ and reducing to index $0$.
As a result, reasoning by induction, we obtain the sequence of functions $u_0 (x,y), u_1 (x,y), \ldots$ satisfying the following conditions.

(i) For each $j\in \mathbb{N}$, function $u_j (x,y)$ at all points of the domain $D_j$ is a solution of the Helmholtz equation.

(ii) If $k\leqslant j$, then $u_k (x,y) = \left. u_j (x,y)\right|_{D_k}$.

(iii) In the case of the Dirichlet problem,
$\left. u_j \right|_{\partial D_j \cap \partial D_{j+1}} = \left. f\right|_{\partial D_j \cap \partial D_{j+1}},$
and
$\left. \partial_\nu u_j \right|_{\partial D_j \cap \partial D_{j+1}} = \left. g\right|_{\partial D_j \cap \partial D_{j+1}}$
-- in the case of the Neumann problem.

(iv) In the domain $D_j$, the function $u_j$ can be represented as a generalized potential with a density $\varphi_j (x)$ found as a solution of the integral equation (\ref{E:ie1}) or (\ref{E:ie2}).

Thus, we obtain a sequence of functions $\varphi_j (x)$.
Let us show that this sequence is fundamental in $L_2 \left[0,d\right]$.
Let $R>d$ be a real number, we define the domain
$S_R =\displaystyle\left\{(x,y): x^2 + y^2 \geqslant R^2, \ \ y>0\right\}.$

Consider in $S_R$ the two functions $u_i$ and $u_j$ of this sequence and for definiteness, assume that $i>j$.

As shown, the functions $u_i$ and $u_j$ coincide in domain $D_j$, and hence in $S_{R,i}=S_R \cap D_i$ we have
\begin{equation}\label{equmin}
u_i (x,y) - u_j (x,y) =0, \quad i>j.
\end{equation}

We write the last relation, using the representation of solutions of boundary value problems in the form of potentials (\ref{E:w}) and (\ref{E:v}).

In the case of the Dirichlet problem, we have
\[
\displaystyle\left. \left(u_i  -  u_j\right)\right|_{S_{R,i}}
 = \displaystyle\int\limits_{\partial D_i \setminus \Gamma_1}
\partial_{n (P)} G_1(k;M,P) \varphi_i (\tau) \, d\ell_P -
\displaystyle\int\limits_{\partial D_j
\setminus \Gamma_1}
\partial_{n (P)} G_1(k;M,P) \varphi_j (\tau) \, d\ell_P.
\]

In the case of the Neumann problem, we have
\[
\displaystyle\left. \left(u_i  -  u_j\right)\right|_{S_{R,i}}
 = \displaystyle \int\limits_{\partial D_i \setminus \Gamma_1}
G_2 (k;M,P) \varphi_i (\tau) \, d\ell_P -
\displaystyle \int\limits_{\partial D_j
\setminus \Gamma_1} G_2 (k;M,P) \varphi_j (\tau) \, d\ell_P.
\]

Since the function $\varphi_i$ is defined as in domain $D_i$ and in the domain $D_j$ ($D_j \subset D_i$ with $j<i$),
we can consider the potential on the boundary $\partial D_j$ with a density $\varphi_i$.
Then the following relations hold:
\[
\int\limits_{\partial D_i \setminus \Gamma_1}
\partial_{n (P)} G_1(k;M,P) \varphi_i (\tau) \, d\ell_P -
\int\limits_{\partial D_j \setminus \Gamma_1}
\partial_{n (P)} G_1(k;M,P) \varphi_j (\tau) \, d\ell_P =
\]
\[
=\int\limits_{\partial D_i \setminus \Gamma_1}
\partial_{n (P)} G_1(k;M,P) \varphi_i (\tau) \, d\ell_P -
\int\limits_{\partial D_j \setminus \Gamma_1}
\partial_{n (P)} G_1(k;M,P) \varphi_i (\tau) \, d\ell_P 
\]
\[
+ \int\limits_{\partial D_j \setminus \Gamma_1}
\partial_{n (P)} G_1(k;M,P) \left(\varphi_i (\tau)  - \varphi_j (\tau)\right)\,
d\ell_P.
\]

Hence, from the last relation and from (\ref{equmin}) we obtain
\[
\displaystyle\int\limits_{\partial D_j \setminus \Gamma_1}
\partial_{n (P)} G_1(k;M,P) \left(\varphi_i (\tau)  - \varphi_j (\tau)\right)\,
d\ell_P  =
 \phantom{AAAAAAAAAAA}
\]
\[
= \displaystyle\int\limits_{\partial D_j \setminus \Gamma_1}
\partial_{n (P)} G_1(k;M,P) \varphi_i (\tau) \, d\ell_P -
\int\limits_{\partial D_i \setminus \Gamma_1}
\partial_{n (P)} G_1(k;M,P) \varphi_i (\tau) \, d\ell_P.
\]

Similarly, in the case of the Neumann problem, we have
\[
\displaystyle\int\limits_{\partial D_j \setminus \Gamma_1}
G_2(k;M,P) \left(\varphi_i (\tau)  - \varphi_j (\tau)\right)\,
d\ell_P  =
 \displaystyle\int\limits_{\partial D_j \setminus \Gamma_1}  G_2(k;M,P)
\varphi_i (\tau) \, d\ell_P
\]
\[
- \displaystyle\int\limits_{\partial
D_i \setminus \Gamma_1}  G_2(k;M,P) \varphi_i (\tau) \, d\ell_P.
\]

The right-hand sides of the last relations for $j \rightarrow \infty$ tend to $0$, since $\partial D_j$, $\partial D_i$ approach $\partial D$.
From these relations and the properties of approximation by smooth domains, convergence of the sequence of functions $\left\{\varphi_i\right\}$ follows.
We denote by $\psi^\ast$, $\varphi^\ast$ the limits of the sequence of densities in the case of the Dirichlet and Neumann problem, respectively.

We show that the functions $u^\ast = \mathcal{W} (k) \psi^\ast + \widetilde{u}$, $v^\ast = \mathcal{V} (k) \varphi^\ast + \widetilde{v}$
satisfy the conditions of the boundary value problem.

The use of the trace operator $\gamma$ to the function $u^\ast$ and the operator $\gamma'$ to the function $v^\ast$,
by Lemmas~\ref{Th:6-1} and \ref{Th:6-3}, leads to the relations
\[
\gamma u^\ast = -\frac{1}{2} \psi + W(k) \psi +
\left.\widetilde{u}\right|_{\partial D},
\quad
\gamma' u^\ast = \frac{1}{2} \varphi + V'(k) \varphi +
\left.\widetilde{v}\right|_{\partial D}.
\]
The latter relations are understood in the sense of the nontangential limit of functions whose values on $\partial D$
have the same values as $f$ or $g$ (depending on the boundary condition).
Therefore, we arrive at the conclusion that the relations
\[
\gamma u^\ast (P)  =  f (P), \quad
\gamma' v^\ast (P)  =  g (P), \quad P \in \partial D.
\]

As a result, we conclude that the following theorems hold.
\begin{theorem}\label{Th:7-1}
Under condition $\mbox{\rm Im } k \geqslant 0$, the sequence $\displaystyle \left\{\psi_j (x)\right\}_1^\infty$ of the solutions of
integral equations~(\ref{E:ie1}) converges in a space $L_2 [0,d]$ to a function $\psi (x)$ such that the function
\[
u (M)  = \widetilde{u}\, (M) + \left(\mathcal{W} (k) \psi
\right)(M)
\]
is a solution of the boundary value problem with the Dirichlet condition on the boundary.
We denote by $\widetilde{u}$ the solution of the  Dirichlet problem on the half-plane, and $\mathcal{W} (k) \psi$
is the potential type operator defined by~(\ref{pot6}).
\end{theorem}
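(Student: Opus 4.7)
The plan is to establish the theorem in four stages: convergence of the density sequence in $L_2[0,d]$, verification that the candidate function satisfies the Helmholtz equation, verification of the Dirichlet trace condition, and verification of the radiation condition. The groundwork for the first stage has essentially been laid just before the theorem statement, so the remaining task is to combine it with the mapping properties of the potential type operator $\mathcal{W}(k)$ and the jump relation from Lemma~\ref{Th:6-3}.

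First I would extract from the preceding computation the identity
\[
\int\limits_{\partial D_j \setminus \Gamma_1} \partial_{n(P)} G_1(k;M,P)\bigl(\psi_i(\tau)-\psi_j(\tau)\bigr)\, d\ell_P = R_{ij}(M),
\]
where the right-hand side $R_{ij}$ is the difference of two double-layer potentials over $\partial D_j\setminus\Gamma_1$ and $\partial D_i\setminus\Gamma_1$ with the same density $\psi_i$, and hence tends to zero as $i,j\to\infty$ by the convergence $\partial D_j\to\partial D$ supplied by \cite{verchota}. Taking the trace on $\partial D_j$ and using the jump relation~(\ref{potdvsl}) together with the integral equation~(\ref{E:ie1}) satisfied by $\psi_i,\psi_j$ on $\partial D_i,\partial D_j$ respectively, the problem reduces to showing that $(-\tfrac{1}{2}I + W(k))$ is invertible on $L_2(\partial D)$. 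This invertibility (already used implicitly in the classical theory referenced through~\cite{lip-vuz:2001}--\cite{lip-uczap:2006} and~\cite{verchota}, \cite{costabel}) yields an estimate $\|\psi_i-\psi_j\|_{L_2[0,d]} \lesssim \|R_{ij}\|_{L_2}$, so $\{\psi_j\}$ is Cauchy and converges to some $\psi\in L_2[0,d]$.

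Next I would set $u = \widetilde{u} + \mathcal{W}(k)\psi$. Because the kernel $\partial_{n(P)}G_1(k;M,P)$ is smooth in $M$ away from $\partial D$ and satisfies the Helmholtz equation in $M$, and $\widetilde{u}$ solves the half-plane problem, $u$ solves $\Delta u + k^2 u = 0$ in $D$. Applying Lemma~\ref{Th:6-3} to the trace on $\partial D$ gives
\[
\gamma u = -\tfrac{1}{2}\psi + W(k)\psi + \widetilde{u}\big|_{\partial D}.
\]
Passing to the $L_2$--limit in the integral equation~(\ref{E:ie1}) for $\psi_j$, using continuity of $W(k)$ on $L_2(\partial D)$ and the fact that the approximating integrals reduce to integrals over $\Gamma_2$ since $\psi_j$ is supported there, I obtain $-\tfrac{1}{2}\psi + W(k)\psi = f - \widetilde{u}\big|_{\partial D}$, hence $\gamma u = f$. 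The Dirichlet condition on the remainder $\Gamma_1$ holds automatically from the corresponding property of $\widetilde{u}$.

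The radiation condition~(\ref{Ed:e4}) is satisfied by $u^\ast = u-\widetilde{u} = \mathcal{W}(k)\psi$: it follows from the asymptotic expansion of $H_0^{(1)}(kr)$ at infinity together with the compact support of $\psi$ on $[0,d]$, which allows term-by-term differentiation under the integral sign. The main obstacle I expect is the second step, namely justifying the passage to the limit in the integral equation rigorously, since the boundaries $\partial D_j$ vary and $\psi_j$ live on different curves; this is handled by parameterizing each $\partial D_j\setminus\Gamma_1$ via the homeomorphism $\Lambda_j:\partial D\to\partial D_j$ from~\cite{verchota}, pulling $\psi_j$ back to functions on $\partial D$, and using the pointwise almost everywhere convergence of the normals together with the dominated convergence theorem controlled by the uniform cone property. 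Uniqueness from Theorem~\ref{Th:1} then identifies $u$ as the unique solution.
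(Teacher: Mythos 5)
Your proposal follows essentially the same route as the paper: approximate $D$ by smooth domains $D_j$, show the densities $\psi_j$ from the integral equations~(\ref{E:ie1}) form a Cauchy sequence in $L_2[0,d]$ by exploiting the coincidence of $u_i$ and $u_j$ on $D_j$ and the convergence $\partial D_j\to\partial D$, and then verify the Dirichlet condition for $u=\widetilde{u}+\mathcal{W}(k)\psi$ via the jump relation of Lemma~\ref{Th:6-3}. You in fact make explicit two points the paper leaves implicit --- the invertibility of $-\tfrac{1}{2}I+W(k)$ needed to convert the vanishing of $R_{ij}$ into the norm estimate $\|\psi_i-\psi_j\|_{L_2}\lesssim\|R_{ij}\|_{L_2}$, and the verification of the radiation condition --- but the overall argument is the same.
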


\begin{theorem}\label{Th:7-2}
Under conditions $\mbox{\rm Im } k \geqslant 0$ and $\mbox{\rm Re } k \neq 0$, the sequence $\displaystyle \left\{\varphi_j (x)\right\}_1^\infty$
of the solutions of the integral equations~(\ref{E:ie2}) converges to a function $\varphi (x)$ in the space $L_2 [0,d]$ such that
the function
\[
u\, (M) = \widetilde{v}\, (M) + \left(\mathcal{V} (k) \varphi
\right) (M)
\]
is a solution of the boundary value problem with the Neumann condition on the boundary.
We denote by $\widetilde{v}$ the solution of the Neumann boundary value problem on the half-plane, and $\mathcal{V}(k) \varphi$
is the potential type operator defined by~(\ref{pot5}).
\end{theorem}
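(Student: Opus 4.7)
The plan is to mirror, for the Neumann problem, the argument already sketched in the text for the Dirichlet case, with the single layer operator $\mathcal{V}(k)$ in place of $\mathcal{W}(k)$ and the jump formula~(\ref{potprsl1}) of Lemma~\ref{Th:6-1} in place of~(\ref{potdvsl}). First I would invoke the Verchota-type approximation of the Lipschitz domain $D$ by smooth domains $\{D_j\}$ constructed in Section~\ref{S:3}, and build the adjusted sequence of solutions $u_j$ in each $D_j$ by induction as in properties (i)--(iii); on each smooth $D_j$ the classical theory from \cite{lip-vuz:2001}--\cite{lip-uczap:2006} provides a unique solution representable as $u_j = \widetilde{v} + \mathcal{V}(k)\varphi_j$ with $\varphi_j \in L_2[0,d]$ solving~(\ref{E:ie2}) on $\partial D_j$.

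Next I would show $\{\varphi_j\}$ is Cauchy in $L_2[0,d]$. The key identity is~(\ref{equmin}): because $u_i \equiv u_j$ on $D_j$ for $i>j$, rewriting both as single layer potentials on $\partial D_i$ and $\partial D_j$ and adding and subtracting an intermediate integral over $\partial D_j$ with density $\varphi_i$ gives an expression of the form
\[
\int_{\partial D_j \setminus \Gamma_1} G_2(k;M,P)\bigl(\varphi_i(\tau) - \varphi_j(\tau)\bigr)\,d\ell_P = R_{ij}(M),
\]
where $R_{ij}$ is the difference of the $\varphi_i$-potential taken over $\partial D_j$ and over $\partial D_i$. Because the normal vectors and boundaries $\partial D_j, \partial D_i$ converge to $\partial D$ in the sense of~\cite{verchota}, and because $\|\varphi_i\|_{L_2}$ is uniformly bounded (by Fredholm alternative applied to~(\ref{E:ie2}) together with the uniqueness result of Theorem~\ref{Th:1'}), $R_{ij}$ tends to zero in $L_2$ on the trace space as $i,j\to\infty$. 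Taking the interior normal trace and applying~(\ref{potprsl1}) converts this identity into
\[
\bigl(\tfrac{1}{2}I + V'(k)\bigr)(\varphi_i - \varphi_j) = \gamma'_{D_j} R_{ij} + o(1),
\]
and the invertibility of $\tfrac{1}{2}I + V'(k)$ on $L_2(\partial D)$ (Fredholm plus Theorem~\ref{Th:1'}) transfers the smallness to $\varphi_i - \varphi_j$ in $L_2[0,d]$.

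Let $\varphi^\ast = \lim_j \varphi_j$ in $L_2[0,d]$ and set $u^\ast(M) = \widetilde{v}(M) + (\mathcal{V}(k)\varphi^\ast)(M)$. Since $\mathcal{V}(k)\varphi^\ast$ is built from the fundamental solution $G_2$, the Helmholtz equation~(\ref{Ed:e1}) holds in $D$, and the radiation condition~(\ref{Ed:e4}) follows from the $H_0^{(1)}$ asymptotics already embedded in $G_2$ and from the corresponding property of $\widetilde{v}$. For the Neumann condition, Lemma~\ref{Th:6-1} gives almost everywhere on $\partial D$
\[
\gamma' u^\ast(P) = \gamma'\widetilde{v}(P) + \tfrac{1}{2}\varphi^\ast(P) + V'(k)\varphi^\ast(P),
\]
and passing to the limit $j\to\infty$ in~(\ref{E:ie2}) (valid because $\varphi_j \to \varphi^\ast$ in $L_2$, $\partial D_j \to \partial D$ in the Verchota sense, and the kernel $\partial_{n(M)} G_2$ acts continuously on $L_2$) produces exactly the identity $\gamma' u^\ast = g$ on $\partial D$.

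The principal obstacle is the Cauchy step. Two points require care: (a) showing $R_{ij}\to 0$ in the right norm, which uses the uniform-cone property of $D$ to control nontangential maximal functions of the integrals over $\partial D_j$ in terms of $\|\varphi_i\|_{L_2}$, together with the pointwise and $L_2$ convergence $n(\Lambda_j(P))\to n(P)$; and (b) establishing the uniform-in-$j$ bounded invertibility of $\tfrac{1}{2}I + V'(k)$ on $L_2(\partial D_j)$. Part (b) is the standard invertibility result on a single Lipschitz boundary combined with the observation that $\partial D_j$ is, for large $j$, a small perturbation of $\partial D$ within the Lipschitz class, so the inverse bounds are stable; the uniqueness hypothesis $\mathrm{Re}\,k\neq 0$ enters here exactly as in Theorem~\ref{Th:1'}.
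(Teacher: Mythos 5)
Your proposal follows essentially the same route as the paper: approximation of $D$ by smooth domains $D_j$, construction of the adjusted solutions $u_j = \widetilde{v} + \mathcal{V}(k)\varphi_j$ with densities solving~(\ref{E:ie2}), the Cauchy argument for $\left\{\varphi_j\right\}$ via the coincidence identity~(\ref{equmin}) and the add-and-subtract decomposition over $\partial D_j$ and $\partial D_i$, and verification of the Neumann condition through the jump relation~(\ref{potprsl1}). The only substantive difference is that you make explicit the inversion of $\tfrac{1}{2}I + V'(k)$ and the uniform $L_2$ bound on the densities needed to pass from smallness of the potentials to smallness of $\varphi_i - \varphi_j$, a step the paper leaves implicit when it asserts that convergence of $\left\{\varphi_i\right\}$ ``follows'' from the vanishing of the right-hand sides.
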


\begin{theorem}\label{Th:7-2}
There exists a unique solution $u(x,y)$ of the boundary value problems under consideration, and the representations hold
\[
u = \widetilde{u} + \mathcal{W} (k) \left[\left(I-W(k)\right)^{-1}
f\right] \quad \mbox{\it in the case of the Dirichlet problem},
\]
\[
u = \widetilde{v} + \mathcal{V} (k)
\left[\left(I-V'(k)\right)^{-1} g\right] \quad \mbox{\it in the case of the Neumann problem}.
\]
\end{theorem}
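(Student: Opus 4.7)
The plan is to assemble the conclusion from four ingredients that have already been put in place: the uniqueness statements of Theorems~\ref{Th:1}--\ref{Th:1'}, the convergence of the approximating potential densities from Theorem~\ref{Th:7-1} and its Neumann counterpart, the jump relations of Lemmas~\ref{Th:6-1}--\ref{Th:6-3}, and a standard Fredholm argument on $L_2(\partial D)$. Uniqueness under $\mathrm{Im}\,k\geqslant 0$ (and $\mathrm{Re}\,k\neq 0$ in the Neumann case) is immediate from Theorems~\ref{Th:1} and~\ref{Th:1'}, so only existence together with the explicit representation remains to be justified. Existence itself has just been produced constructively: the functions $u^\ast = \widetilde u + \mathcal{W}(k)\psi^\ast$ and $v^\ast = \widetilde v + \mathcal{V}(k)\varphi^\ast$ obtained as limits of the smooth-domain approximation already solve the Dirichlet and Neumann problems in the required class.

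The next step is to recognise the limiting densities $\psi^\ast$ and $\varphi^\ast$ as solutions of boundary integral equations of the second kind, posed directly on $\partial D$. For the Dirichlet case I would apply the nontangential trace $\gamma$ to the ansatz $u = \widetilde u + \mathcal{W}(k)\psi$ and use the jump formula~(\ref{potdvsl}) from Lemma~\ref{Th:6-3}; imposing $\gamma u = f$ and absorbing $\gamma\widetilde u$ into the right-hand side in accordance with~(\ref{E:ie1}) yields an equation of the schematic form $(I - W(k))\psi = f$ on $\partial D$. For the Neumann case the parallel computation, applied to $v = \widetilde v + \mathcal{V}(k)\varphi$ with the help of the jump formula~(\ref{potprsl1}) from Lemma~\ref{Th:6-1} and the Neumann datum $\gamma' v = g$, gives $(I - V'(k))\varphi = g$. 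Substituting the formal inverses into the potential representations reproduces the two formulas in the statement of the theorem.

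The main obstacle is therefore the invertibility of $I - W(k)$ and $I - V'(k)$. I would argue via Fredholm theory on $L_2(\partial D)$: the mapping properties listed after Lemma~\ref{Th:6-3}, the fact that the irregular part of $\partial D$ is supported in the compact interval $[0,d]$, and the compactness of the embedding $H^1([0,d])\hookrightarrow L_2([0,d])$ together with the smoothing properties of the $G_m$ kernels on the remainder $\Gamma_1$ show that $W(k)$ and $V'(k)$ differ from a principal part with well understood spectrum by a compact correction. Hence $I - W(k)$ and $I - V'(k)$ are Fredholm of index zero. Injectivity is then a consequence of uniqueness: any $\psi$ (resp.~$\varphi$) in the kernel would, through $\mathcal{W}(k)$ (resp.~$\mathcal{V}(k)$), produce a solution of the homogeneous diffraction problem in $D$ satisfying the radiation condition~(\ref{Ed:e4}), forcing that solution to vanish by Theorems~\ref{Th:1} and~\ref{Th:1'}, and therefore forcing the density itself to vanish via the same jump relations. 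The Fredholm alternative then delivers bounded inverses $(I - W(k))^{-1}$ and $(I - V'(k))^{-1}$, and their substitution into the representations above completes the argument.
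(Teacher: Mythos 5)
Your overall architecture matches what the paper actually does: the paper offers no standalone proof of this theorem, but presents it as read off from the preceding construction --- uniqueness from Theorems~\ref{Th:1} and~\ref{Th:1'}, existence from the limit functions $u^\ast=\widetilde u+\mathcal{W}(k)\psi^\ast$ and $v^\ast=\widetilde v+\mathcal{V}(k)\varphi^\ast$ built via the smooth approximating domains, and the resolvent form obtained by applying $\gamma$, $\gamma'$ and the jump relations of Lemmas~\ref{Th:6-1} and~\ref{Th:6-3} to arrive at a second--kind equation for the density. Your first two paragraphs reproduce exactly this. One small point you gloss over but should flag: the jump formulas produce the operators $-\tfrac12 I+W(k)$ and $\tfrac12 I+V'(k)$ acting on the density, with right-hand side $f-\gamma\widetilde u$ (resp.\ $g-\gamma'\widetilde v$), so the operators $I-W(k)$, $I-V'(k)$ appearing in the statement only emerge after a renormalization of constants consistent with~(\ref{E:ie1})--(\ref{E:ie2}); calling the equation ``schematic'' papers over a genuine bookkeeping discrepancy in the theorem as stated.

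The substantive gap is in your third paragraph. On a Lipschitz curve the double-layer operator $W(k)$ and the adjoint single-layer derivative $V'(k)$ are \emph{not} compact on $L_2(\partial D)$, and they are not compact perturbations of an operator ``with well understood spectrum'' in the sense your Fredholm argument requires: the corner/Lipschitz part of the boundary contributes nontrivial essential spectrum to the harmonic double layer, and this is precisely why invertibility of $c I+W_0$ on Lipschitz domains is a deep theorem of Verchota, proved via Rellich identities rather than the Riesz--Fredholm alternative. What \emph{is} compact is the difference between the Helmholtz kernel and the Laplace kernel, so your argument reduces the problem to the invertibility of the harmonic layer operators --- but that is the hard step, and asserting ``Fredholm of index zero by compact correction'' does not deliver it. (One would also need to check that the particular constant $1$ in $I-W(k)$, after the renormalization above, lies outside the essential spectrum.) Your injectivity argument has a second, smaller gap: to conclude from $\mathcal{W}(k)\psi\equiv 0$ in $D^+$ that $\psi=0$, you must also control the solution in the complementary domain $D^-$, since the jump relation~(\ref{potdvsl}) only gives $\psi$ as the \emph{difference} of the two nontangential limits; uniqueness for the interior problem alone does not suffice. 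The paper avoids all of this by obtaining the density as the limit of the densities $\varphi_j$ on the smooth approximating boundaries (where classical Fredholm theory applies) rather than by inverting an operator directly on $\partial D$; if you want to prove the representation formula as stated, you need to either follow that route or invoke the Lipschitz-domain invertibility results from the cited literature explicitly.
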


\end{document}